\newtheorem{theorem}{Theorem}[section]
\newtheorem{definition}{Definition}[section]
\newtheorem{lemma}[theorem]{Lemma}
\newenvironment{remark}[1][Remark :]{\begin{trivlist}
\item[\hskip \labelsep {\bfseries #1}]}{\end{trivlist}}
\newcommand{\F}{\mathbb{F}}
\renewcommand{\o}{\mathcal{O}}
\journal{Theoretical Computer Science}
\journal{}
\begin{document}

\begin{frontmatter}

\title{Further Refinements of Miller Algorithm \\on Edwards curves}

\author{Duc-Phong Le\corref{cor1}}
\ead{tslld@nus.edu.sg}

\author{Chik How Tan}
\ead{tsltch@nus.edu.sg}

\cortext[cor1]{Corresponding author}

\address{Temasek Laboratories, National University of Singapore\\5A Engineering Drive 1, \#09-02, Singapore 117411.}

\begin{abstract}

Recently, Edwards curves have received a lot of attention in the cryptographic community due to their fast scalar multiplication algorithms. Then, many works on the application of these curves to pairing-based cryptography have been introduced. Xu and Lin (CT-RSA, 2010) presented refinements to improve the Miller algorithm that is central role compute pairings on Edwards curves.
In this paper, we study further refinements to Miller algorithm. Our approach is {\em generic}, hence it allow to compute {\em both} Weil and Tate pairings on pairing-friendly Edwards curves of {\em any} embedding degree. We analyze and show that our algorithm is faster than the original Miller algorithm and the Xu-Lin's refinements. 

\end{abstract}

\begin{keyword}
Miller algorithm \sep Pairing computation \sep Edwards curves \sep Tate/Weil pairings
\end{keyword}

\end{frontmatter}

\section{Introduction}
 
In 2007, Bernstein and Lange~\cite{BL07} introduced Edwards curves to cryptography. Their study and subsequent works in~\cite{BL07b,BBJ+08,HWC+08,BJ11} showed that the addition law on such curves is more efficient than all previously known formulas. Edwards curves have thus attracted great interest in applications that require elliptic curve operations to achieve faster arithmetic. 
Then, the application of Edwards curves to pairing-based cryptography has been studied in several research  papers~\cite{DS08,IJ08,ALNR10,XL10,LT13}. Although, pairing computation on Edwards curves is slightly slower than on Weierstrass curves so far. However, in many pairing-based cryptosystems, the most time-consuming operation is still to compute scalar multiples $aP$ of a point $P$. 

Pairing (or {\it bilinear map}) is probably the most useful cryptographic tool in the 2000s. It was first introduced to cryptography in Joux's seminal paper~\cite{Jou00} in 2010 that describes a tripartite (bilinear) Diffie-Hellman key exchange. Then, the use of cryptosystems based on pairings has had a huge success with some notable breakthroughs such as the first identity-based encryption scheme~\cite{BF01}, the short signature scheme~\cite{BLS01}.

Ever since it was first described, Miller's algorithm~\cite{Mil86,Mil04} has been the heart of the computation of pairings on elliptic curves. 
Many papers are devoted to improvements in its efficiency. For example, it can run faster on pairing-friendly elliptic curves that belong to specific families~\cite{BKLS02, BLS03,BN05,CLN10}. Another approach of improving the Miller's algorithm is to reduce the Miller-loop length by introducing variants of Tate pairings, for example Eta pairing~\cite{BGHS07}, Ate pairing~\cite{HSV06}, and particular {\it optimal pairings}~\cite{Ver10,Hes08}. For a more generic approach, studies in~\cite{BMX06,BELL10,LL11} improved the performance for computing pairings of any type ({\it i.e.}, Weil, Tate, optimal pairings), and on {\it generic} pairing-friendly elliptic curves. 

Basically, Miller's algorithm is based on a rational function $g$ of three points $P_1, P_2, P_3$. This function is called Miller function and has its divisor $div(g) = (P_1) + (P_2) - (P_3) - (\o)$, where $\o$ is a distinguished rational point. For curves of Weierstrass form, this function is defined to be the line passing through points $P_1$ and $P_2$ divided by the vertical line passing through the point $P_3$, where $P_3= (P_1 + P2)$. 
On Edwards curves, finding such a point $P_3$ is not straightforward  as in Weierstrass curves because Edwards equation has degree $4$, {\em i.e.} any line has $4$ intersections with the curves instead of $3$ on Weierstrass curves.


In~\cite{ALNR10}, Arene {\em et al.} presented the first geometric interpretation of the group law on Edwards curves and showed how to compute Tate pairing on twisted Edwards curves by using a conic $\cal C$ of degree $2$. They also introduced explicit formulas with a focus on curves having an even {\em embedding degree}\footnote{Let $E$ be an elliptic curve defined over a prime finite field $\F_p$, and $r$ be a prime dividing $\#E(\F_p$). The embedding degree of $E$ with respect to $r$ is the smallest positive integer $k$ such that $r | p^k - 1$. In other words, $k$ is the smallest integer such that $\F^*_{p^k}$ contains $r$-roots of unity.}. 
In order to speed up the pairing computation on {\em generic} Edwards curves, Xu and Lin~\cite{XL10}  proposed refinements to Miller's algorithm. Their refinements are inspired from Blake-Murty-Xu's refinements on Weierstrass curves~\cite{BMX06}. Their refinements are {\it generally} faster than the original Miller's algorithm on Edwards curves described in~\cite{ALNR10}.

In this paper, we further extend the Blake-Murty-Xu's method on Edwards curves and propose new refinements of Miller's algorithm. Similarly Xu-Lin's refinements, our approach is {\it generic}. Although it did not bring a dramatic efficiency as that of Arene {\em et al.} for computing Tate pairing over Edwards curves with {\it even} embedded degree, but the proposed refinements can be used to compute pairing of {\it any} type over pairing-friendly Edwards curves with {\it any} embedding degree. 
This approach is of particular interest to compute optimal pairings~\cite{Ver10, Hes08}, and in situations where the denominator elimination technique using a twist is not possible (e.g., Edwards curves with {\it odd} embedding degrees).\footnote{Note that by definition optimal pairings only require about $\log_2(r)/\varphi(k)$ iterations of the basic loop, where $r$ is the group order, $\varphi$ is Euler's totient function, and $k$ is the embedding degree. For example, when $k$ is prime, then $\varphi(k) = k - 1$. If we choose a curve having embedding degree $k \pm 1$, then $\varphi(k\pm 1)\leq \frac{k+1}{2}$ which is roughly $\frac{\varphi(k)}{2}=\frac{k-1}{2}$, so that at least twice as many iterations are necessary if curves with embedding degrees $k \pm 1$ are used instead of curves of embedding degree $k$.}
We also analyze and show that our new algorithm is faster than the original Miller's algorithm and its refinements presented in~\cite{XL10}.

The paper is organized as follows. In Section 2 we briefly recall some background on pairing computation on Edwards curves, and then the Xu-Lin's refinements. In Section 3 we present our refinements of Miller's algorithm. Section 4 gives some discussion on performance of the proposed algorithms. Section 5 is our conclusion.

\section{Preliminaries}

\subsection{Pairings on Edwards Curves}
\label{sec:pair_comp}

Let $\F_p$ be a prime finite field, 
where $p$ is a prime different from $2$. A {\em twisted Edwards curve} $E_{a, d}$ defined over $\F_p$ is the set of solutions $(x, y)$ of the following {\em affine equation}:

\begin{equation}
\label{eq:ell}
 E_{a, d} : ax^2 + y^2 =  1 + dx^2y^2,
\end{equation}

\noindent where $a, d \in \F_p^*$, and $a \ne d$. Edwards curves are special case of twisted Edwards curves where $a$ can be rescaled to $1$. Twisted Edwards curves have the fastest doubling and addition operations in elliptic curve cryptography (see~\cite{BBJ+08}).

Cryptographic pairing is a bilinear map that takes as input two points on elliptic curves defined over finite fields and returns a value in the extension finite field. The key to the definition of pairings is the evaluation of rational functions in divisors. The pairings over (hyper-)elliptic curves are computed using the algorithm proposed by Miller \cite{Mil04}. The main part of Miller's algorithm is to construct the rational function $f_{r,P}$ and evaluating $f_{r,P}(Q)$ with $div(f_{r,P}) = r(P) - (rP) - [r - 1](\cal{O})$ for divisors $P$ and $Q$. In this section, we just recall the Miller algorithm that is so far the best known method to compute pairings. Readers who want to study more about pairings can take a look at papers~\cite{Mil04, Ver10}.

Let $m$ and $n$ be two integers, and $g_{mP,nP}$ be a rational function whose divisor $div(g_{mP,nP}) = (mP) + (nP) - ([m + n]P) - (\mathcal{O})$. We call the function $g_{mP, nP}$ {\em Miller function}. Miller's algorithm is based on the following lemma.

\begin{lemma}[Lemma 2, \cite{Mil04}]
\label{Millerlemma}
For $n$ and $m$ two integers, up to a multiplicative constant, we have
\begin{equation}
\label{eq:Miller}
 f_{m + n, P}=f_{m, P} f_{n, P} g_{mP, nP}.
\end{equation}

\end{lemma}

Equation~(\ref{eq:Miller}) is called \emph{Miller relation}, which is proved by considering divisors. The Miller algorithm makes use of Lemma~\ref{Millerlemma} with $m = n$ in a doubling step and $n = 1$ in an addition step. For Edwards curves, Arene {\em et al.}~\cite{ALNR10} defined Miller's function in the following theorem.



\begin{theorem}[Theorem 2, \cite{ALNR10}]\label{theo1}
Let $a, d \in \F_p^*, a \ne d$ and $E_{a, d}$ be a twisted Edwards curve over $\F_p$. Let $P_1, P_2 \in E_{a, d}(\F_p)$. Define $P_3 = P_1 + P_2$. Let  $\phi$ is the equation of the conic $\cal C$ passing through $P_1, P_2, -P_3, \Omega_1, \Omega_2, \o'$ whose divisor is $(P_1) + (P_2) + (-P_3) + (\o') - 2(\Omega_1) -2(\Omega_2)$. Let $\ell_{1, P_3}$ be the horizontal line going through $P_3$ whose divisor is $div(\ell_{1, P_3}) = (P_3) + (-P_3) - 2(\Omega_2)$, and $\ell_{2, \o}$ is the vertical line going through $\o$ and $\o'$ whose divisor is $(\o) + (\o') - 2(\Omega_1)$. Then we have
\begin{equation}\label{Millerfunction}
div\left( \frac{\phi_{P_1, P_2}}{\ell_{1, P_3}\ell_{2, \o}} \right) \sim (P_1) + (P_2) - (P_3) - (\o).
\end{equation}
\end{theorem}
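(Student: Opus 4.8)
The plan is to reduce the claimed linear equivalence to a purely additive computation on divisors. Since $div(f/g) = div(f) - div(g)$ for rational functions on the curve, it suffices to establish separately the three divisors named in the statement, namely $div(\phi_{P_1,P_2})$, $div(\ell_{1,P_3})$ and $div(\ell_{2,\o})$, and then combine them. Once the three are in hand, the subtraction is mechanical: the terms $(-P_3)$, $(\o')$, $2(\Omega_1)$ and $2(\Omega_2)$ all cancel against matching terms, leaving exactly $(P_1)+(P_2)-(P_3)-(\o)$. So the entire content of the theorem is concentrated in justifying those three divisors, and in particular the divisor of the conic; the final equality is in fact exact, which a fortiori gives the asserted equivalence.

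The two line divisors are the easier ingredients, and I would treat them first using the symmetry of the Edwards model. Negation on $E_{a,d}$ is the map $(x,y)\mapsto(-x,y)$, so the horizontal line $y = y_{P_3}$ meets the affine curve exactly at $P_3$ and $-P_3$; the neutral point $\o=(0,1)$ and its partner $\o'=(0,-1)$ both lie on the axis $x=0$, so the vertical line $x=0$ meets the affine curve at $\o$ and $\o'$. Each line has degree $1$ while the projective Edwards quartic has degree $4$, so B\'ezout forces two further intersections, which must lie at infinity. I would verify that the projective closure is singular precisely at the two points at infinity $\Omega_1,\Omega_2$, each of multiplicity two, and that a horizontal (resp.\ vertical) line passes through $\Omega_2$ (resp.\ $\Omega_1$) with intersection multiplicity two. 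Reading the orders off the normalization then yields $div(\ell_{1,P_3}) = (P_3)+(-P_3)-2(\Omega_2)$ and $div(\ell_{2,\o})=(\o)+(\o')-2(\Omega_1)$.

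The substantive step, and the one I expect to be the main obstacle, is the divisor of the conic $\phi_{P_1,P_2}$, since this is where the group law is encoded. I would regard the affine degree-$2$ equation $\phi$ as a rational function on the smooth model of $E_{a,d}$ and analyse its poles first: homogenizing, the requirement that the conic pass through both points at infinity kills the pure $x^2$ and $y^2$ terms, so $\phi$ reduces to the form $c_{1,1}xy + c_{1,0}x + c_{0,1}y + c_{0,0}$, and because each $\Omega_i$ is a point of multiplicity two on the quartic, the intersection there has multiplicity two, giving the polar part $-2(\Omega_1)-2(\Omega_2)$. The remaining coefficients are fixed up to scale by demanding that $\phi$ vanish at $P_1$, $P_2$ and $\o'$. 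By B\'ezout the conic meets the quartic in eight points, seven of which are now accounted for, so there is a unique eighth affine intersection $R$. To identify $R$ I would invoke Abel's theorem: since $\phi$ is a rational function, the group-law sum of its divisor is $\o$, which expresses $R$ as the group-negative of $P_1 \oplus P_2 \oplus \o' \ominus 2\Omega_1 \ominus 2\Omega_2$. Using the relations satisfied in the group by the two-torsion point $\o'$ and by the points at infinity of this model, the contributions of $\o'$, $\Omega_1$ and $\Omega_2$ cancel, leaving $R = \ominus(P_1 \oplus P_2) = -P_3$. This gives $div(\phi_{P_1,P_2}) = (P_1)+(P_2)+(-P_3)+(\o')-2(\Omega_1)-2(\Omega_2)$.

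With the three divisors established, the proof concludes by the additive computation already described: subtracting $div(\ell_{1,P_3})$ and $div(\ell_{2,\o})$ from $div(\phi_{P_1,P_2})$ cancels the $(-P_3)$, the $(\o')$, and both double points at infinity, leaving precisely $(P_1)+(P_2)-(P_3)-(\o)$, which is the divisor of the Miller function $g_{P_1,P_2}$. I expect the only delicate points to be the local analysis at the singular points at infinity, best handled on the normalization, and the clean bookkeeping of the two-torsion in the Abel's-theorem step that pins $R$ to $-P_3$; everything else is routine divisor arithmetic.
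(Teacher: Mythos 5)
Your core computation is correct, but note what this paper actually does with this statement: it quotes it, divisors and all, from Ar\`ene \emph{et al.}~\cite{ALNR10} and never proves it. Since the statement itself supplies $div(\phi_{P_1,P_2})$, $div(\ell_{1,P_3})$ and $div(\ell_{2,\o})$ as hypotheses, the only argument needed (and the one consistent with how the paper proves its own Lemmas~\ref{lem:1} and~\ref{lem:2}, ``by calculating divisors'') is the mechanical subtraction in your first and last paragraphs, which indeed yields the \emph{exact} equality $div\bigl(\phi_{P_1,P_2}/(\ell_{1,P_3}\ell_{2,\o})\bigr)=(P_1)+(P_2)-(P_3)-(\o)$, a fortiori the stated equivalence. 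Everything else in your proposal is an attempt to re-derive the hypotheses, i.e., to re-prove the theorem of~\cite{ALNR10}; that is legitimate but far more than the paper requires, and it is where your sketch has problems.

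Two concrete issues in that extra material. First, your claim about the lines at infinity is backwards: the horizontal line $Y=y_{P_3}Z$ passes through $\Omega_1=(1{:}0{:}0)$, not $\Omega_2$, and the vertical line $X=0$ passes through $\Omega_2=(0{:}1{:}0)$, not $\Omega_1$. The pole of the \emph{function} $\ell_{1,P_3}=(Y-y_{P_3}Z)/Z$ sits at $\Omega_2$ precisely because the line misses $\Omega_2$: the line's intersection with the curve at $\Omega_1$ cancels that part of the pole divisor $2(\Omega_1)+2(\Omega_2)$ of $1/Z$, and what survives is $-2(\Omega_2)$. So the divisors you wrote down are correct but do not follow from your stated premise (your premise would produce them with $\Omega_1$ and $\Omega_2$ exchanged); the discrepancy is invisible at the end only because the final subtraction cancels $2(\Omega_1)+2(\Omega_2)$ as a whole. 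The same zero/pole conflation appears in your account of the conic's polar part. Second, the Abel's-theorem step that pins the eighth intersection point to $-P_3$ silently assumes that the Edwards addition law $\oplus$ coincides with the divisor-class group law, together with the relations, on the desingularization, that the two points over $\Omega_1$ sum to $\o'$ and the two points over $\Omega_2$ sum to $\o$. That compatibility is exactly the geometric content at stake — it is what makes the theorem an interpretation of the group law — so you must either import it explicitly (Bernstein--Lange's birational equivalence with a Weierstrass model) or, as is done in~\cite{ALNR10}, verify directly that $-P_3$ as produced by the addition formulas satisfies the equation of the conic. As written, your sketch rests its hardest step on the very fact being interpreted.
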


The rational function $g_{P_1, P_2} = \frac{\phi_{P_1, P_2}}{\ell_{1, P_3} \ell_{2, \o}}$ consisting of three terms, can be thus considered as Miller function on Edwards curves. Miller's algorithm for Edwards curves using this function works as in Algorithm~\ref{algo:Miller}.\\

\begin{algorithm}
   \KwIn{  $r = \sum_{i = 0}^t r_i 2^i$ with $r_i \in \{0, 1\}$, $P, Q \in E[r]$;}
   \KwOut{ $f = f_r(Q)$;}
  \BlankLine
   {
   $R \leftarrow P$, $f \leftarrow 1$, $g \leftarrow 1$\\
 \For{$i = t - 1$  \emph{\KwTo}  $0$}
 {
        $f \leftarrow {f}^2 \cdot \phi_{R, R}(Q)$ \\
	$g \leftarrow {g}^2 \cdot \ell_{1, \o}(Q) \cdot \ell_{2, 2R}(Q)$ \\
	$R \leftarrow 2R$ \\
   \If{$(r_i = 1)$}   {
    ${f} \leftarrow {f} \cdot \phi_{R, P}(Q)$ \\
    ${g} \leftarrow {g} \cdot \ell_{1, \o}(Q) \cdot \ell_{2, R + P}(Q)$\\
    $R \leftarrow R + P$ \\
   }  }
    \KwRet{ $f/g$ } }
   \caption{Miller's Algorithm for twisted Edwards curves~\cite{XL10}}
   \label{algo:Miller}

\end{algorithm}


\subsection{Xu-Lin Refinements}
For simplicity,  in what follows, we make use of the notation $\phi_{P, P}$ (resp. $\ell_{2,[2]P}$, and $\ell_{1, \o}$) replacing for $\phi_{P, P}(Q)$ (resp. $\ell_{2,[2]P}(Q)$, and $\ell_{1, \o}(Q)$). 
By extending the Blake et al.'s method~\cite{BMX06} to Edwards curves, Xu and Lin~\cite{XL10} presented a refinement to Miller algorithm. Their algorithm was achieved owing to the following theorem.

\begin{theorem}[Theorem 1 in~\cite{XL10}]
Let $E_{a, d}$ be a twisted Edwards curve over $\F_p$ and $P, R \in  E_{a, d}$ be a point of order $r$. Then\footnote{There were typos in the first formula of Theorem 1 in~\cite{XL10}. It should be $\ell_{2,[2]R}\ell_{1, \o}$ instead of $\ell_{1,[2]R}\ell_{2, \o}$.}

\begin{enumerate}
\item $$(\frac{\phi_{R, R}}{\ell_{2,[2]R}\ell_{1, \o}})^2 \frac{\phi_{[2]R, [2]R}}{\ell_{2,[4]R}\ell_{1, \o}} = \frac{\phi_{R, R}^2}{\phi_{[-2]R,[-2]R}\phi_{\o, \o}};$$

\item $$\frac{\phi_{R, R}}{\ell_{2,[2]R}\ell_{1, \o}} \frac{\phi_{[2]R, P}}{\ell_{2,[2]R + P}\ell_{1, \o}} = \frac{\phi_{R, R} \ell_{2, P}}{\phi_{[2]R + P,-P}\ell_{1, \o}}.$$

\end{enumerate}

\end{theorem}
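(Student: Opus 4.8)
The plan is to read both identities as equalities of rational functions on $E_{a,d}$ (after evaluation, as equalities of their values at $Q$) and to establish them by comparing divisors. Two rational functions on a curve with the same divisor differ by a nonzero multiplicative constant, and for pairing computation such a constant is immaterial, since it is annihilated by the final exponentiation (equivalently, it is an $r$-th power that cancels). Hence it suffices to show that the two sides of each identity have identical divisors, and the whole proof reduces to careful divisor bookkeeping built on the Miller relation of Lemma~\ref{Millerlemma} and the Miller function of Theorem~\ref{theo1}.

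First I would assemble the divisors of the three elementary building blocks. From Theorem~\ref{theo1}, the conic attached to a pair $S_1,S_2$ satisfies $div(\phi_{S_1,S_2}) = (S_1) + (S_2) + (-(S_1+S_2)) + (\o') - 2(\Omega_1) - 2(\Omega_2)$; the horizontal line through a point $S$ has divisor $(S) + (-S) - 2(\Omega_2)$; and the vertical line through $S$ has divisor $(S) + (\bar S) - 2(\Omega_1)$, where $\bar S$ is obtained from $S$ by negating its $y$-coordinate. I would also record the two degenerate cases that actually drive the computation: the tangent line $y=1$ at the neutral point has the double zero $2(\o) - 2(\Omega_2)$, and the conic $\phi_{\o,\o}$ degenerates to divisor $3(\o) + (\o') - 2(\Omega_1) - 2(\Omega_2)$. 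Combining these, the Edwards Miller function $g_{S,S} = \phi_{S,S}/(\ell_{2,[2]S}\ell_{1,\o})$ has divisor $2(S) - ([2]S) - (\o)$, exactly the shape guaranteed by Theorem~\ref{theo1}.

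For the first identity I would compute the left-hand side directly from these Miller-function divisors: squaring $g_{R,R}$ and multiplying by $g_{[2]R,[2]R}$ telescopes to $4(R) - ([4]R) - 3(\o)$. For the right-hand side I would expand $div(\phi_{R,R}^2)$, $div(\phi_{[-2]R,[-2]R})$ and $div(\phi_{\o,\o})$ and observe that the contributions at $[-2]R$, at $\o'$ and at the two singular points $\Omega_1,\Omega_2$ cancel in pairs, again leaving $4(R) - ([4]R) - 3(\o)$, so the two divisors coincide. The second identity is treated in the same spirit: the left-hand side has divisor $2(R) + (P) - ([2]R+P) - 2(\o)$, and expanding $div(\phi_{R,R})$, $div(\ell_{2,P})$, $div(\phi_{[2]R+P,-P})$ together with the divisor of the line through $\o$ in the denominator reproduces the same divisor once the terms at $[-2]R$, $-P$, $\o'$, $\Omega_1$ and $\Omega_2$ cancel.

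I expect the main obstacle to be precisely this bookkeeping at the distinguished points $\o,\o',\Omega_1,\Omega_2$. The singular points at infinity enter every conic and every line with multiplicity $2$, and the neutral element contributes through the tangency of $y=1$ and through the degeneration of $\phi_{\o,\o}$; it is the exact multiplicities there, rather than the generic affine intersections, that force the spurious terms to cancel. A careless count leaves a residual divisor supported on $\{\o,\o',\Omega_1,\Omega_2\}$, so the delicate step, and the one demanding the most care, is to pin down which line through $\o$ (the vertical $x=0$ versus the tangent $y=1$) appears in each denominator and with what order of vanishing. Once the divisors are shown to agree term by term, each identity follows up to a multiplicative constant, which is all that the pairing computation requires.
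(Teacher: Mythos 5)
Your proposal follows essentially the same route as the paper: the paper gives no independent proof of this theorem, stating only that it ``was proven by calculating divisors'' (deferring details to Xu--Lin), and your divisor-bookkeeping argument is exactly that calculation carried out explicitly. Your computations are sound --- including the two degenerate divisors $\mathrm{div}(\phi_{\o,\o}) = 3(\o) + (\o') - 2(\Omega_1) - 2(\Omega_2)$ and $2(\o) - 2(\Omega_2)$ for the tangent $y=1$, and your correctly flagged observation that the second identity only balances (terms at $\o'$, $\Omega_1$, $\Omega_2$ cancelling) when the line through $\o$ in its right-hand denominator is that tangent rather than the vertical $x=0$, which is precisely the notational pitfall behind the typos the paper's footnote mentions.
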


The above theorem was proven by calculating divisors (see~\cite{XL10} for more details). From this theorem, they introduced refinements and an improved Miller algorithm in radix-4 representation~\cite[Algorithm 3]{XL10}. 
They also claimed that the total cost of the proposed algorithm is about $76.8\%$ of that of the original Miller's algorithm.

\section{Our Improvements on Miller's Algorithm}

\subsection{First Improvement}

We first present a new rational function whose divisor is equivalent to Miller function (Eq~\ref{Millerfunction}) presented in \cite{ALNR10}.

\begin{definition}\label{def:Millerfunction}
Let $E_{a,d}$ be a twisted Edwards curve and $R, P \in E_{a,d}$. Let $\phi_{R, P}$ be a conic passing through  $R$ and $P$, $\phi_{-R, -P}$ be a conic passing through $-R$ and $-P$, and let $\phi_{R + P, -[R + P]}$ a conic passing through $R + P$ and $-[R + P]$. Then we define

\begin{equation}\label{eq:Millerfunction}
g_{R, P} = \frac{\phi_{R, P}}{\phi_{R + P, -[R + P]}}.
\end{equation}

\end{definition}

\begin{lemma}\label{lem:1}

 We have 

$$div(g_{R, P}) = (R) + (P) - ([R + P]) - \o.$$
\end{lemma}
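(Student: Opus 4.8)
The plan is to read off the divisor of each conic directly from the geometric description in Theorem~\ref{theo1} and then subtract. The structural fact I would rely on, inherited from Arene \emph{et al.}, is that any conic $\phi_{A, B}$ meeting the quartic $E_{a, d}$ is forced to pass through the two points at infinity $\Omega_1, \Omega_2$ (each a singular point of the projective closure, contributing intersection multiplicity $2$) and through $\o'$; by B\'ezout's theorem the conic then meets $E_{a, d}$ in exactly one further affine point, which the group law identifies as $-(A + B)$. Concretely, Theorem~\ref{theo1} gives the degree-zero divisor
$$div(\phi_{A, B}) = (A) + (B) + (-(A + B)) + (\o') - 2(\Omega_1) - 2(\Omega_2).$$

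Next I would instantiate this formula twice. For the numerator, taking $A = R$ and $B = P$ forces the fourth affine intersection to be $-(R + P)$, so
$$div(\phi_{R, P}) = (R) + (P) + (-[R + P]) + (\o') - 2(\Omega_1) - 2(\Omega_2).$$
For the denominator, taking $A = R + P$ and $B = -[R + P]$ gives $A + B = \o$, so the forced fourth affine intersection is $-(A + B) = -\o = \o$, whence
$$div(\phi_{R + P, -[R + P]}) = ([R + P]) + (-[R + P]) + (\o) + (\o') - 2(\Omega_1) - 2(\Omega_2).$$
Since $div$ is a homomorphism, I would then form $div(g_{R, P}) = div(\phi_{R, P}) - div(\phi_{R + P, -[R + P]})$. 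The points at infinity cancel (both conics carry $-2(\Omega_1) - 2(\Omega_2)$), the term $(\o')$ cancels, and the term $(-[R + P])$ cancels, leaving exactly $(R) + (P) - ([R + P]) - \o$, which is the claim.

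The step I expect to require the most care is justifying that the \emph{implicit} fourth affine intersection point of $\phi_{A, B}$ is precisely $-(A + B)$, and that the two points at infinity absorb the remaining four intersections with multiplicity two each. This is exactly the content of the geometric group-law interpretation underlying Theorem~\ref{theo1}, so I would invoke that theorem rather than re-derive the intersection analysis; everything else is routine bookkeeping on degree-zero divisors. The only case worth checking separately is the denominator, where the two arguments are negatives of one another and the forced fourth point degenerates to $\o$, but this follows from the same formula applied at $A + B = \o$ together with $-\o = \o$.
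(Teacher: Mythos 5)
Your proposal is correct and follows essentially the same route as the paper: both read off the divisors of the numerator conic $\phi_{R,P}$ and the degenerate denominator conic $\phi_{R+P,-[R+P]}$ (whose fourth point is $-\o=\o$) from Theorem~\ref{theo1}, then subtract and cancel the contributions at $\o'$, $\Omega_1$, $\Omega_2$, and $-[R+P]$. The paper's proof is just this divisor bookkeeping written out in one array; your additional remarks on the B\'ezout/geometric justification are folded into the citation of Theorem~\ref{theo1}, exactly as the paper implicitly does.
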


\begin{proof}
By calculating divisors, it is straightforward to see that 
$$\begin{array}{rcl}
\mbox{div}(g_{R, P}) &  = & (R) + (P) + (-[R + P]) + (\o') - 2(\Omega_1) -2(\Omega_2) \\
& & - ([R + P]) - (-[R + P]) - (\o) - (\o') + 2(\Omega_1) + 2(\Omega_2) \\
 & = & (R) + (P)  - ([R + P]) - (\o) \;. \\  
\end{array} $$ 
which concludes the proof.

\end{proof}

Recall that, on Weierstrass curves, the Miller function $g_{R, P} = \frac{\ell_{R, P}}{\upsilon_{R + P}}$, where $\ell_{R, P}, \upsilon_{R + P}$ are lines passing through $R, P$ and $R + P, -[R + P]$, respectively. The Eq.~(\ref{eq:Millerfunction}) thus looks similar to the Miller function on Weierstrass curves if a conic plays role as a line function. A variant of Miller algorithm by using Eq. 4 is described in Algorithm~\ref{algo:Miller1}.

\begin{algorithm}
   \KwIn{  $r = \sum_{i = 0}^t r_i 2^i$ with $r_i \in \{0, 1\}$, $P, Q \in E[r]$;}
   \KwOut{ $f = f_r(Q)$;}
  \BlankLine
   {
   $R \leftarrow P$, $f \leftarrow 1$, $g \leftarrow 1$\\
 \For{$i = t - 1$  \emph{\KwTo}  $0$}
 {
    $f \leftarrow {f}^2 \cdot \phi_{R, R}(Q)$ \\
	$g \leftarrow {g}^2  \cdot \phi_{2R, -2R}(Q)$ \\
	$R \leftarrow 2R$ \\
   \If{$(r_i = 1)$}   {
     ${f} \leftarrow {f} \cdot \phi_{R, P}(Q)$ \\
	$g \leftarrow {g} \cdot \phi_{R + P, -(R + P)}(Q)$ \\
	$R \leftarrow R + P$ \\
   }  }
    \KwRet{ $f/g$ } }
   \caption{First improvement of Miller's Algorithm for twisted Edwards curves}
   \label{algo:Miller1}

\end{algorithm}

\begin{remark}
As the original Miller's algorithm, our algorithm cannot avoid divisions needed to update $f$. But we can reduce them easily to one inversion at the end of the addition chain (for the cost of one squaring in addition at the each step of the algorithm).
\end{remark}

At first glance, Algorithm~\ref{algo:Miller1} requires only one multiplication for updating the function $g$ instead of two in Algorithm~\ref{algo:Miller}. Note that this operation is costly because it is performed in the full extension finite field. In Section~\ref{sec:discussion}, we will provide a detailed analysis on the performance of these algorithms. In the following section, we introduce a further refinement that even offers a better performance in comparison to Algorithm~\ref{algo:Miller1}.
 
\subsection{Refinement}
 
The new refinement is inspired from the following lemmas. 

\begin{lemma} 
\label{lem:2}
We have


\begin{equation}
g_{R, P} = \frac{\phi_{R, -R}\cdot \phi_{P, -P}}{\phi_{-R, -P}\cdot \phi_{\o,\o}}. 
\label{eq2}
\end{equation}

\end{lemma}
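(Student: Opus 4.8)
The plan is to prove~(\ref{eq2}) by the divisor method used elsewhere in the paper: I will compute the divisor of the right-hand side, check that it coincides with $div(g_{R,P})$ as determined in Lemma~\ref{lem:1}, and then appeal to the standard fact that two rational functions on $E_{a,d}$ with the same divisor differ only by a multiplicative constant. Since Miller functions are in any case only defined up to such a constant (cf. Lemma~\ref{Millerlemma}), this equality of divisors is exactly what the statement asserts.

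First I would record the divisor of a generic conic $\phi_{A,B}$ through two points $A,B$. Following the geometric picture of Theorem~\ref{theo1}, a conic (degree $2$) meets $E_{a,d}$ (degree $4$) in eight points counted with multiplicity; the two points at infinity $\Omega_1,\Omega_2$ absorb multiplicity $2$ each, and the remaining affine intersections are $A$, $B$, $-[A+B]$, and the distinguished point $\o'$, so that
$$div(\phi_{A,B}) = (A) + (B) + (-[A+B]) + (\o') - 2(\Omega_1) - 2(\Omega_2).$$
I would then specialize this template to the four conics in~(\ref{eq2}), taking care with the degenerate instances. For $\phi_{R,-R}$ and $\phi_{P,-P}$ the third affine intersection is $-[R+(-R)] = -\o = \o$, so each contributes an extra $(\o)$; for $\phi_{-R,-P}$ it is $-[(-R)+(-P)] = [R+P]$; and for the doubly degenerate $\phi_{\o,\o}$ the point $\o$ occurs with multiplicity $3$, giving $div(\phi_{\o,\o}) = 3(\o) + (\o') - 2(\Omega_1) - 2(\Omega_2)$.

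Adding the divisors of the numerator and subtracting those of the denominator, the auxiliary points cancel: the contributions of $\o'$ and of $\Omega_1,\Omega_2$ balance between numerator and denominator, $(-R)$ and $(-P)$ cancel against their copies in $\phi_{-R,-P}$, the two copies of $(\o)$ in the numerator against the three in the denominator leave $-(\o)$, and the single $(R+P)$ from $\phi_{-R,-P}$ survives with a minus sign. What remains is exactly $(R) + (P) - ([R+P]) - (\o) = div(g_{R,P})$, which closes the argument. I expect the only real obstacle to be the careful bookkeeping of these degenerate local contributions, in particular justifying the multiplicity $3$ of $\o$ in $div(\phi_{\o,\o})$ (the conic tangent to $E_{a,d}$ at $\o$ and passing again through $-[2]\o = \o$) and the reading $-[R+(-R)] = \o$; once these are settled the global cancellation is automatic.
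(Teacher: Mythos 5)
Your proof is correct and follows essentially the same route as the paper's: both compute the divisor of the right-hand side using the conic divisor template of Theorem~\ref{theo1}, handle the degenerate instances in the same way (the extra $(\o)$ in $div(\phi_{R,-R})$ and $div(\phi_{P,-P})$, and $div(\phi_{\o,\o}) = 3(\o) + (\o') - 2(\Omega_1) - 2(\Omega_2)$), and conclude by matching the result with $div(g_{R,P})$ from Lemma~\ref{lem:1}. Your explicit remark that equality of divisors determines the function only up to a multiplicative constant is a point the paper leaves implicit (consistent with Lemma~\ref{Millerlemma}), but the argument is otherwise the same.
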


\begin{proof}

This lemma is again proved by considering divisors.
Indeed, $$\begin{array}{rcl}

\mbox{div}(\frac{\phi_{R, -R}\cdot \phi_{P, -P}}{\phi_{-R, -P}\cdot \phi_{\o,\o}}) & = & (R) + (-R) + (\o) + (\o') - 2(\Omega_1) -2(\Omega_2)\\
& & + (P) + (-P) + (\o) + (\o') - 2(\Omega_1) -2(\Omega_2)\\
& & - (-R) - (-P) - ([R + P]) -  (\o') + 2(\Omega_1) + 2(\Omega_2)\\
& & - 3(\o) -  (\o') + 2(\Omega_1) + 2(\Omega_2)\\
& = &  (R) + (P)  - ([R + P]) - (\o) \\
& = & \mbox{div}(g_{R, P}) \;. \\
\end{array} $$ 
which concludes the proof.

\end{proof}

\begin{lemma}\label{lem:3} Let $P, R \in E_{a, d}$ be points of order $r$, we have
\begin{equation}
\frac{\phi_{R, R}}{\phi_{R, -R}^2 \cdot  \phi_{2R, -2R}} = \frac{1}{\phi_{-R, -R}\cdot \phi_{\o, \o}}. 
\label{eq3}
\end{equation} 

\end{lemma}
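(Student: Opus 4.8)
The plan is to avoid a fresh divisor computation altogether and instead obtain Eq.~(\ref{eq3}) as a one-line algebraic consequence of the two facts already established in this section: the defining formula~(\ref{eq:Millerfunction}) for $g_{R,P}$ and the factorization of Lemma~\ref{lem:2}. The observation is that both the definition and Lemma~\ref{lem:2} are expressions for the \emph{same} function $g_{R,P}$, so after specializing $P=R$ in each and eliminating $g_{R,R}$ between them, the claimed identity should drop out with no further work.

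Concretely, I would first set $P=R$ in~(\ref{eq:Millerfunction}). Since then $R+P=2R$ and $-[R+P]=-2R$, this yields
\begin{equation*}
g_{R,R}=\frac{\phi_{R,R}}{\phi_{2R,-2R}}.
\end{equation*}
Next I would set $P=R$ in Lemma~\ref{lem:2}, so that $\phi_{R,-R}\cdot\phi_{P,-P}$ becomes $\phi_{R,-R}^2$ and $\phi_{-R,-P}$ becomes $\phi_{-R,-R}$, giving
\begin{equation*}
g_{R,R}=\frac{\phi_{R,-R}^2}{\phi_{-R,-R}\cdot\phi_{\o,\o}}.
\end{equation*}
Equating the two right-hand sides and dividing through by $\phi_{R,-R}^2$ isolates the left-hand side of~(\ref{eq3}) and leaves exactly $1/(\phi_{-R,-R}\cdot\phi_{\o,\o})$, which is the assertion. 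Because Lemma~\ref{lem:2} and the definition are stated as genuine equalities of functions, this route delivers~(\ref{eq3}) as an exact equality rather than merely up to a constant.

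As a self-contained alternative — and really the justification that underlies Lemma~\ref{lem:2} itself — one can prove Eq.~(\ref{eq3}) directly by the divisor bookkeeping used throughout this section. Using the conic divisor formula from Theorem~\ref{theo1}, namely $\mbox{div}(\phi_{A,B})=(A)+(B)+(-[A+B])+(\o')-2(\Omega_1)-2(\Omega_2)$, I would evaluate it at the pairs $(R,R),(R,-R),(2R,-2R),(-R,-R),(\o,\o)$, form the combination $\mbox{div}(\phi_{R,R})-2\,\mbox{div}(\phi_{R,-R})-\mbox{div}(\phi_{2R,-2R})$ for the left side, and compare it with $-\mbox{div}(\phi_{-R,-R})-\mbox{div}(\phi_{\o,\o})$ for the right side. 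Both are expected to collapse to $-2(-R)-(2R)-3(\o)-2(\o')+4(\Omega_1)+4(\Omega_2)$, so the two functions share a divisor and hence agree up to a nonzero multiplicative constant, which is all the pairing computation requires.

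The only real care-points I anticipate are the degenerate specializations, not any genuine difficulty. In $\phi_{R,-R}$, $\phi_{2R,-2R}$ and $\phi_{\o,\o}$ the third prescribed intersection $-[A+B]$ collapses onto the identity (and in $\phi_{\o,\o}$ all three prescribed points coincide at $\o$, contributing $3(\o)$), so I must apply the divisor formula with these multiplicities counted correctly and confirm that $\phi_{2R,-2R}$ is a legitimate, non-degenerate conic of the pencil; this is precisely where the hypothesis that $R$ has order $r$ (ensuring $2R\ne\o$ and that the relevant points are distinct) is used. Since the algebraic route borrows the already-verified Lemma~\ref{lem:2} and thereby sidesteps this entirely, I would present that as the main proof and relegate the divisor comparison to a confirming remark.
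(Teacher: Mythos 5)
Your main argument is exactly the paper's proof: the paper disposes of this lemma with the single sentence that it ``is easy to be proven using Definition~\ref{def:Millerfunction} and Lemma~\ref{lem:2},'' and your specialization $P=R$ in both, followed by eliminating $g_{R,R}$ and dividing by $\phi_{R,-R}^2$, is precisely the computation the paper leaves to the reader (your divisor-comparison alternative is a correct but unneeded supplement, matching how Lemma~\ref{lem:2} itself is proved). The only caveat is your claim of ``exact equality'': since Lemma~\ref{lem:2} is itself established only by comparing divisors, all these identities hold up to a nonzero multiplicative constant, which is the standard convention in this setting and all the pairing computation requires.
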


This lemma is easy to be proven using Definition~\ref{def:Millerfunction} and Lemma~\ref{lem:2}. The factor $\phi_{\o, \o}$ can be precomputed and integrated into the factor $\phi_{-R, -R}$ as follows:

\begin{small}
\begin{align*}
\label{eq:simply1}
 \phi'_{-R, -R}(Q) & = \phi_{-R, -R}(Q) \cdot \phi_{\o, \o}(Q) \nonumber \\& = (c_{Z^2}(Z_Q^2 + Y_Q Z_Q) + c_{XY} X_Q Y_Q + c_{XZ} X_Q Z_Q)(X_Q(Z_Q - Y_Q)) \nonumber \\
			  & = c_{Z^2}\eta_1 + c_{XY} \eta_2 + c_{XZ} \eta_3,
\end{align*}
\end{small}

\noindent where $\eta_1 = (Z_Q^2 + Y_Q Z_Q)(X_Q(Z_Q - Y_Q))$, $\eta_2 =  X^2_Q Y_Q(Z_Q - Y_Q)$, $\eta_3 = X^2_Q Z_Q(Z_Q - Y_Q)$ are fixed for whole computation, thus they can be precomputed and stored. The Eq.~\ref{eq3} can thus be rewritten as follows:

\begin{equation}
\frac{\phi_{R, R}}{\phi_{R, -R}^2 \cdot  \phi_{2R, -2R}} = \frac{1}{\phi'_{-R, -R}}. 
\label{eq4}
\end{equation} 

Our algorithm is described by the pseudo-code in Algorithm~\ref{our:algoMiller} by applying Eq~\ref{eq4}. To do so, the proposed algorithm tries to delay the factor $\phi_{R, -R}$ for the next step. We make use of a memory variable $m$ to imply whether a delayed factor in the current step or not. If $m = 1$, there is a delayed factor for the next step and otherwise. There will be no delayed factor for the next step ({\it i.e.}, $m$ will be assigned to $0$) if the current bit $b_i = 0$ and there exists a delayed factor for the current step (line~\ref{eco2} in the Algorithm~\ref{our:algoMiller}). This will lead to the most expensive case (line~\ref{eco4}) of the proposed algorithm if the next bit $b_{i - 1} = 1$ (counting from the most significant bit).




\begin{algorithm}
   
   \KwIn{  $r = \sum_{i = 0}^t b_i 2^i$, $b_i \in \{0, 1\}$.}
   \KwOut{ $f$}
  \BlankLine
   { 
   $T \leftarrow P$, $f \leftarrow 1$, $g \leftarrow 1$, $m \leftarrow 0$ \;
 \For{$i = t - 1$  \emph{\KwTo}  $0$} 
 {  
    \nl \If{$(b_i = 0) \land (m = 0)$} { \label{eco1}
      $f \leftarrow f^2 \cdot \phi_{R,R} \; ; \quad$ $g \leftarrow g^2 \; ; \quad$ $R \leftarrow 2R \; ; \quad$\\
    }

    \nl \If{$(b_i = 0) \land (m = 1)$} { \label{eco2} 
      $f \leftarrow f^2 \; ; \quad$ $g \leftarrow g^2 \cdot \phi'_{-R,-R} \; ; \quad$ 
	$R \leftarrow 2R \; ; \quad$ \\
    }

    \nl \If{$(b_i = 1) \land (m = 1)$} { \label{eco3} 
      $f \leftarrow f^2 \cdot \phi_{2R,P} \; ; \quad$ $g \leftarrow g^2 \cdot \phi'_{-R,-R} \; ; \quad$ 
	$R \leftarrow 2R + P \; ; \quad$ \\
    }

    \nl \If{$(b_i = 1) \land (m = 0)$} { \label{eco4} 
      $f \leftarrow f^2 \cdot \phi_{R,R} \cdot \phi_{2R,P} \; ; \quad$ $g \leftarrow g^2 \cdot \phi'_{2R, -2R}  \; ; \quad$
	$T \leftarrow 2R + P \; ; \quad$\\
    }
$m \leftarrow \lnot m \vee b_i$ \\ 
	
   }
  
  \KwRet{ $\frac{f}{g}$} 
}

\caption{Improved Refinement of Miller's Algorithm for any Pairing-Friendly Edwards Curves}
\label{our:algoMiller}

 \end{algorithm}





\section{Discussion}\label{sec:discussion}
In this section, we first compare the proposed algorithm with the original Miller's algorithm over Edwards curves~\cite{ALNR10, XL10}, and the Xu-Lin refinements~\cite{XL10}. Before analyzing the costs of algorithm, we introduce notations for field arithmetic costs. Let $\F_{p^m}$ be an extension of degree $m$ of $\F_p$ for $m \ge 1$ and let ${\bf I}_{p^m}$, ${\bf M}_{p^m}$, ${\bf S}_{p^m}$, and ${\bf add}_{p^m}$ the costs for inversion, multiplication, squaring, and addition in the field $\F_{p^m}$, respectively. We denote by ${\bf m}_a$ the multiplication by the curve coefficient $a$. 

The cost of the algorithms for pairing computation consists of three parts: the cost of updating the functions $f, g$, the cost of updating the point $R$ and the cost of evaluating rational functions at some point $Q$. Note that during Ate pairing computation, coordinates of the point $R$ that is on the twisted curve, . The analysis in~\cite{ALNR10} showed that the total cost of updating the point $R$ and coefficients $c_{Z^2}, c_{XY}$, and $c_{ZZ}$ of the conic is $6{\bf M}_{p^e}+ 5{\bf S}_{p^e} + 2{\bf m}_a$ for each doubling step and $14{\bf M}_{p^e} + 1{\bf m}_a$ for each addition step (see~\cite[\S 5]{ALNR10} for more details), where $e = k/d$. 
Without special treatment, this cost is the same for all algorithms.

The most costly operations in pairing computations are operations in the full extension field $\F_{p^k}$. At high levels of security (i.e. $k$ large), the complexity of operations in $\F_{p^k}$ dominates the complexity of the operations that occur in the lower degree subfields. In this subsection, we only analyze the cost of updating the functions $f, g$ which are generally executed on the full extension field $\mathbb{F}_{p^k}$. Assume that the ratio of one full extension field squaring to one full extension field multiplication is set to ${\bf S}_{p^k} = 0.8 {\bf M}_{p^k}$, a commonly used value in the literature. 
It is clear to see that to update functions $f$ and $g$, the proposed algorithm requires $1{\bf M}_{p^k} + 2{\bf S}_{p^k}$  for a doubling step (lines 1, 3), and $1{\bf M}_{p^k}$ for an addition step (lines 2, 4). TABLE~\ref{tab:Cmp} shows the number of operations needed in $\F_{p^k}$ for updating $f, g$ in different algorithms.

\begin{table*}
	\centering
		\begin{tabular}{|l|c|c|}
		\hline
			& Doubling & Doubling and Addition \\
		\hline
		{\bf Algorithm~\ref{algo:Miller}} & $2{\bf S}_{p^k} + 3{\bf M}_{p^k}$ & $2{\bf S}_{p^k} + 5{\bf M}_{p^k}$\\
		(Miller's algorithm~\cite{ALNR10, XL10})&	= $4.6{\bf M}_{p^k}$ &	= $6.6{\bf M}_{p^k}$	\\
		\hline
		\multirow{2}{*}{{\bf Algorithm in~\cite{ALNR10}}} &  $1{\bf S}_{p^k} + 1{\bf M}_{p^k}$ & $1{\bf S}_{p^k} + 2{\bf M}_{p^k}$\\
						 &	= $1.8{\bf M}_{p^k}$	&	= $2.8{\bf M}_{p^k}$ \\
		\hline
		\multirow{2}{*}{{\bf Algorithm~\ref{algo:Miller1}}} &  $2{\bf S}_{p^k} + 2{\bf M}_{p^k}$ & $2{\bf S}_{p^k} + 4{\bf M}_{p^k}$\\
						 &	= $3.6{\bf M}_{p^k}$	&	= $5.6{\bf M}_{p^k}$ \\
		\hline

		\multirow{2}{*}{{\bf Algorithm~\ref{our:algoMiller}}} & $ 2{\bf S}_{p^k} + 1{\bf M}_{p^k}$ & $ 2{\bf S}_{p^k} + 2{\bf M}_{p^k} = 3.6{\bf M}_{p^k}$(line {\bf 3}) \\
		& = $2.6{\bf M}_{p^k}$ & $ 2{\bf S}_{p^k} + 3{\bf M}_{p^k} = 4.6{\bf M}_{p^k}$ (line {\bf 4})\\
									
		\hline
\end{tabular}
	\caption{Comparison of the cost of updating $f, g$ of Algorithms. ``Doubling'' is when algorithms deal with the bit ``$b_i = 0$'' and ``Doubling and Addition'' is when algorithms deal with the bit ``$b_i = 1$''.}
	\label{tab:Cmp}
\end{table*}

From Table~\ref{tab:Cmp}, for the \emph{generic} case we can see that {\bf Algorithm~\ref{our:algoMiller}} saves two full extension field multiplication when the bit $b_i = 0$ compared with {\bf Algorithm~\ref{algo:Miller}}. When the bit $b_i = 1$, {\bf Algorithm~\ref{our:algoMiller}} saves two or three full extension field multiplications in comparison to {\bf Algorithm~\ref{algo:Miller}}, depending on which case Algorithm~\ref{our:algoMiller} executes. 


In comparison to Arene {\em et al.}'s algorithm~\cite{ALNR10}, Algorithm~\ref{our:algoMiller} requires one more squaring in the full extension field for each doubling step. However, as already mentioned, Arene {\em et al.} can only be applied on Edwards curves with an even embedding degree $k$ for Tate pairing computation, while our approach is generic. It can be applied to any (pairing-friendly) Edwards curve and for both the Weil and the Tate pairing. 

The refinements in~\cite{XL10} are described in radix 4. Their algorithm allows to eliminate some rational functions from Eq~(\ref{Millerfunction}) during pairing computation. Let  $r = \sum_{i = 0}^{l' - 1} q_i 4^i$, with $q_i \in \{0, 1, 2, 3\}$. Table~\ref{tab:Cmp2} compares our algorithm and their algorithm. From Table~\ref{tab:Cmp2}, it clearly see that {\bf Algorithm~\ref{our:algoMiller}} is generally faster than the refinements of Miller's algorithm in~\cite{XL10} for all four cases.

\begin{table*}[h]
	\centering

		\begin{tabular}{|c|c|c|}
		\hline
		& {\bf Algorithm in~\cite{XL10}} & {\bf Algorithm~\ref{our:algoMiller}} \\
		\hline
		$q = 0$ & $5{\bf S}_{p^k} + 3{\bf M}_{p^k}$ & $4{\bf S}_{p^k} + 2{\bf M}_{p^k}$ \\
		\hline
		\multirow{2}{*}{$q = 1$}  & \multirow{2}{*}{$4{\bf S}_{p^k} + 7{\bf M}_{p^k}$} & $4{\bf S}_{p^k} + 3{\bf M}_{p^k}$ (line~{\bf 3}) \\
			& & $4{\bf S}_{p^k} + 4{\bf M}_{p^k}$ (line~{\bf 4})\\
		\hline
		\multirow{2}{*}{$q = 2$}  & \multirow{2}{*}{$4{\bf S}_{p^k} + 7{\bf M}_{p^k}$} & $4{\bf S}_{p^k} + 3{\bf M}_{p^k}$ (line~{\bf 3}) \\
		& & $4{\bf S}_{p^k} + 4{\bf M}_{p^k}$ (line~{\bf 4})\\
		\hline
		\multirow{2}{*}{$q = 3$}  & \multirow{2}{*}{$4{\bf S}_{p^k} + 10{\bf M}_{p^k}$} & $4{\bf S}_{p^k} + 4{\bf M}_{p^k}$ (line~{\bf 3}) \\
		& & $4{\bf S}_{p^k} + 5{\bf M}_{p^k}$ (line~{\bf 4})\\
		\hline
\end{tabular}
	\caption{Comparison of our algorithm with the refinements in~\cite{XL10}.}
	\label{tab:Cmp2}
\end{table*}

\section{Conclusion}
In this paper, we extended the Blake-Murty-Xu's method to propose further refinements to Miller's algorithm over Edwards curves. Our algorithm is \emph{generically} more efficient than the original Miller's algorithm (Algorithm~\ref{algo:Miller}) the Xu-Lin's refinements in~\cite{XL10}. Especially, the proposed algorithm can be applied for computing pairings of any type over any pairing-friendly Edwards curve. This allows the use of Edwards curves with embedding degree not of the form $2^i3^j$, and is suitable for the computation of {\em optimal pairings}~\cite{Ver10}.



\end{document}